\def\BibTeX{{\rm B\kern-.05em{\sc i\kern-.025em b}\kern-.08em
    T\kern-.1667em\lower.7ex\hbox{E}\kern-.125emX}}
    \newtheorem{prop}{Proposition}
\begin{document}

\title{Sparse Millimeter Wave Channel Estimation From Partially Coherent Measurements
}

 \author{Weijia Yi$^{\dagger}$, Nitin Jonathan Myers$^{\ast}$, and Geethu Joseph$^{\dagger}$ \\
 			$^{\dagger}$Department of Microelectronics, Delft University of Technology, The Netherlands\\
		$^{\ast}$Delft Center for Systems and Control, Delft University of Technology, The Netherlands\\
		 Emails: $\{$w.yi-1@student.tudelft.nl\},\{N.J.Myers, G.Joseph$\}$@tudelft.nl}
	\maketitle

\maketitle

\begin{abstract}
This paper develops a channel estimation technique for millimeter wave (mmWave) communication systems. Our method exploits the sparse structure in mmWave channels for low training overhead and accounts for the phase errors in the channel measurements due to phase noise at the oscillator. Specifically, in IEEE 802.11ad/ay-based mmWave systems, the phase errors within a beam refinement protocol packet are almost the same, while the errors across different packets are substantially different. Consequently, standard sparsity-aware algorithms, which ignore phase errors, fail when channel measurements are acquired over multiple beam refinement protocol packets. We present a novel algorithm called partially coherent matching pursuit for sparse channel estimation under practical phase noise perturbations. Our method iteratively detects the support of sparse signal and employs alternating minimization to jointly estimate the signal and the phase errors. We numerically show that our algorithm can reconstruct the channel accurately at a lower complexity than the benchmarks. 
\end{abstract}

\begin{IEEEkeywords}
Compressed sensing, phase noise, phase errors, matching pursuit, support detection, alternating minimization
\end{IEEEkeywords}

\section{Introduction}
Millimeter wave (mmWave) systems, currently used in 5G and IEEE 802.11ad/ay devices, are capable of achieving Gbps data rates by beamforming over wide bandwidths. Unlike the lower frequencies, the propagation characteristics at mmWave result in high scattering~\cite{9568459}. So, the wireless channel between the transmitter and the receiver is sparse in the angle domain representation. This sparse structure allows the use of compressed sensing (CS) algorithms for channel estimation from fewer measurements compared to classical channel estimation, thereby reducing the training overhead~\cite{9521836}.  Unfortunately, the phase of the compressed channel measurements acquired with typical IEEE 802.11ad/ay hardware is perturbed. The phase perturbations occur due to residual carrier frequency offset as well as random phase noise at the oscillator~\cite{7835110}, which is more significant at mmWave than lower carrier frequencies
\cite{9103070}. The phase errors induced by phase noise result in a model mismatch between the standard CS model and the observed measurement model. Due to this mismatch, standard CS-based channel estimation methods \cite{7913633, alkhateeb2014channel, 8306126} that are agnostic to such phase errors fail. Therefore, in this paper, we focus on estimating sparse mmWave channels utilizing the sparse structure while handling the phase errors in the measurements.

\par We next briefly review the existing sparse mmWave channel estimation algorithms handling phase error. One of the early studies combined a CS algorithm with a Kalman filter to track the phase noise, but a large phase noise variance could invalidate the method~\cite{7905916}. Several other works considered phaseless measurements with independent random phase offset on each training slot, forming a joint phase retrieval and sparse recovery problem. One such approach used gradient descent for phase retrieval but required several measurements with known phase errors for calibration~\cite{10012988}. Also, \cite{ling2015self} jointly estimated the phase errors and the channel by constructing high-dimensional sparse tensors, resulting in high computational complexity. To decrease the complexity, a sparse bipartite graph code-based phaseless decoding scheme was introduced~\cite {8770109}. The above algorithms in~\cite{10012988,ling2015self,8770109}, however, do not exploit the specific structure of phase errors in the measurements owing to the short packet signaling used for beam training in IEEE 802.11ad/ay standards. 
To handle this structure, \cite{9007529} developed a partially coherent compressive phase retrieval (PC-CPR) technique to solve for sparse vectors from a \emph{partially coherent measurement model}. Here, partially coherent indicates that the phase errors fluctuate slightly during a short time slot, but are relatively independent across packets. Finally, \cite{9801831} developed a message passing-based algorithm for partially coherent sparse recovery. Such an approach usually requires higher computational complexity than greedy matching pursuit-based algorithms.
\par We present a compressive greedy algorithm for sparse channel estimation with a partially coherent measurement model. Such a model is motivated by the signaling structure used in IEEE 802.11ad/ay standards, wherein a spatial channel measurement is acquired using the training (TRN) subfield within a beam refinement protocol (BRP) packet~\cite{9502043}. With a typical phase noise process, phase coherence is preserved within a packet, but lost across different packets. 
Our main contribution is the development of a joint support detection rule across packets, combining alternating minimization and matching pursuit techniques to estimate both the channel and phase noise for channel estimation. To assess our algorithm's performance, we also derive guarantees on partial support recovery. By assuming that the phase noise remains constant within a BRP packet but varies over packets, we substantially reduced the number of parameters to be estimated. Also, matching pursuit in our method results in a lower computational complexity than other benchmarks. 
Finally, we show by simulations that our algorithm
outperforms orthogonal matching pursuit (OMP), PC-CPR, and Sparse-Lift~\cite{ling2015self} in the presence of phase noise.


\section{Channel model and frame structure}
\label{sec:model}
We consider a mmWave system with an analog antenna array comprising $N$ antennas at the transmitter (TX) and a single antenna receiver (RX) as shown in Fig.~\ref{fig:system}. The focus of this paper is on transmit beam alignment through channel estimation. Although we assume a single antenna at the RX for ease of exposition, our approach can be extended to multi-antenna receivers using an appropriate array response vector. 
\begin{figure}[htbp]
\centerline{\includegraphics[width=0.9\columnwidth]{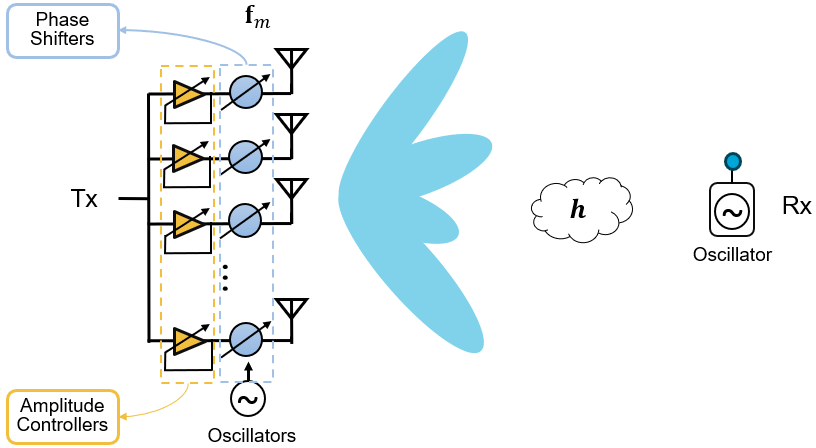}}
\caption{An mmWave MISO system with an analog array at the TX and a single antenna RX. Our goal is to estimate the channel $\mathbf{h}$ under phase noise.}
\label{fig:system}
\end{figure}

We consider a narrowband system and model the multiple input single output (MISO) channel between the TX and the RX as a vector $\mathbf{h} \in \mathbb{C}^{N}$.  Let $K$ denote the number of propagation paths in the environment. The path gain and the direction of arrival associated with the $k^{\mathrm{th}}$ path are denoted by $h_k$ and $\theta_k$. Then, the channel is
\begin{equation}
\label{eq:channel_spatial}
    \mathbf{h} = \sum^K_{k=1}h_k\mathbf{a}(\theta_k).
\end{equation}
Here, $\mathbf{a}(\theta)$ is the transmit array response vector given by
\begin{equation}
    \mathbf{a}(\theta) = \left[1, e^{\mathsf{j}\pi\sin{\theta}},\cdots , e^{\mathsf{j}\pi(N-1)\sin{\theta}}\right]^{\mathsf{T}},
\end{equation}
for a half-wavelength spaced uniform linear array at the TX.
We use the discrete Fourier transform dictionary to obtain the angle-domain representation of the channel, 
\begin{equation}
\label{eq:channel_angle}
    \mathbf{x} = \mathbf{U}_N\mathbf{h},
\end{equation}
where $\mathbf{U}_N$ is the $N\times N$ unitary discrete Fourier transform matrix. The vector $\mathbf{x}$ in \eqref{eq:channel_angle} is sparse due to high scattering at mmWave carrier frequencies. 

We next describe the frame structure used to obtain measurements for channel estimation. To this end, we consider the IEEE 802.11ad frame structure shown in Fig.~\ref{fig:packet}. The TX applies distinct beamformers over $P$ different packets, for the RX to acquire channel measurements, within the channel coherence time. We define $M$ as the number of beamformers applied in each packet. With IEEE 802.11ad, $M$ can be at most $128$ \cite{standard}; however, it can often be smaller (e.g. $16$) to acquire redundant measurements and enhance the spreading gain. In that case, numerous beam refinement protocol (BRP) packets can be used to acquire enough spatial measurements for channel estimation.
\begin{figure}[htbp]
\centerline{\includegraphics[width=1\columnwidth]{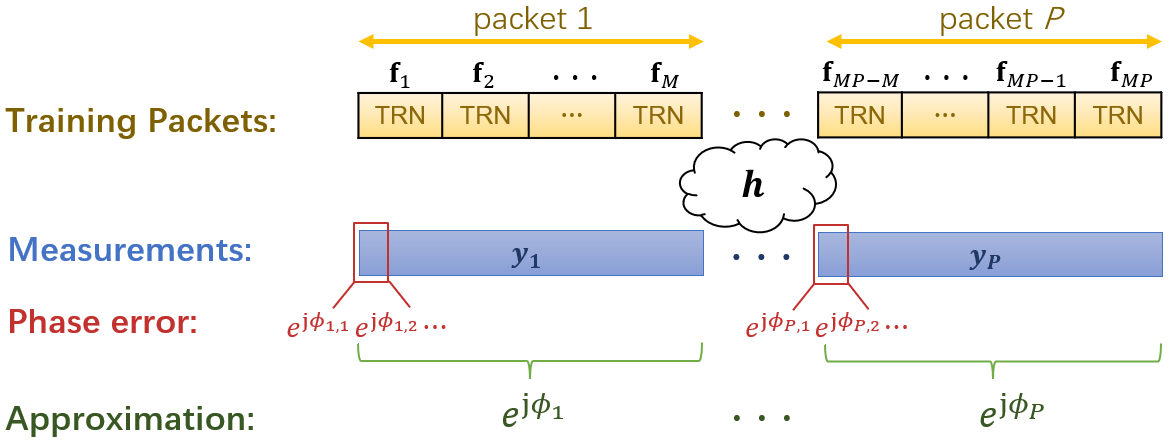}}
\caption{Partially coherent measurement model where the phase errors in the channel measurements are assumed to be constant within each packet.}
\label{fig:packet}
\end{figure}

Let $\mathbf{f}_{m}\in\mathbb{C}^{N}$ denote the $m^{\mathrm{th}}$ beamformer at the TX and $y[m]$ denote the $m^{\mathrm{th}}$ received channel measurement. This measurement is perturbed by phase noise $\phi_m$ and additive white Gaussian noise $w[m]$ of variance $\sigma^2$, i.e., 
\begin{equation}
y[m]=e^{\mathsf{j}\phi_m}\mathbf{f}_{m}^{\ast}\mathbf{h}+w[m]
\label{eq:sparse_phase_error}
=e^{\mathsf{j}\phi_m}\mathbf{f}_{m}^{\ast}\mathbf{U}^{\ast}_N \mathbf{x}+w[m],
\end{equation}
using \eqref{eq:channel_angle}. Note that $\mathbf{f}_{m}^{\ast}$ is the conjugate transpose of $\mathbf{f}_{m}$. Here, the phase noise $\phi_{m}$ can be modeled as a Wiener process, i.e., $ \phi_{m}|_{\phi_{m-1}} \sim \mathcal{N}(\phi_{m-1},\tau^2)$, where $\tau$ is $2\pi f_c\sqrt{c(T_m-T_{m-1})}$. Here, $f_c$ is the carrier frequency, $c$ is an oscillator-dependent constant, and $T_m$ is the time stamp associated with the $m^{\mathrm{th}}$ measurement. Under these modeling assumptions, we aim to estimate the sparse channel $\mathbf{x}$ using the measurements from \eqref{eq:sparse_phase_error} with the knowledge of $\mathbf{f}_{m}$, for $m=1,2,\ldots,M$. 




\section{Sparse Channel Estimation Algorithm}
In this section, we first reformulate the channel estimation problem into a CS problem to exploit the sparsity in $\mathbf{x}$ and then, develop a greedy algorithm for channel estimation. 
\subsection{Partially Coherent CS model}
\par To formulate the CS model, we note that the time difference between successive measurements in a BRP packet, i.e., $T_m-T_{m-1}$, is $128 \, \mathrm{ns}$ in IEEE 802.11ad. In contrast, the difference between the successive packets can range from $3\, \mu $s to $44\, \mu$s~\cite{kome2016further}. Therefore, a high variance phase offset is introduced in the measurements when switching to a new packet. Fig. \ref{fig:phase} shows a realization of $\phi_m$ when $M=16$ measurements are acquired in each of the $P=4$ packets. 
\begin{figure}[htbp]
\centerline{\includegraphics[trim=0.8cm 0.5cm 0.8cm 0.55cm,width=0.8\columnwidth]{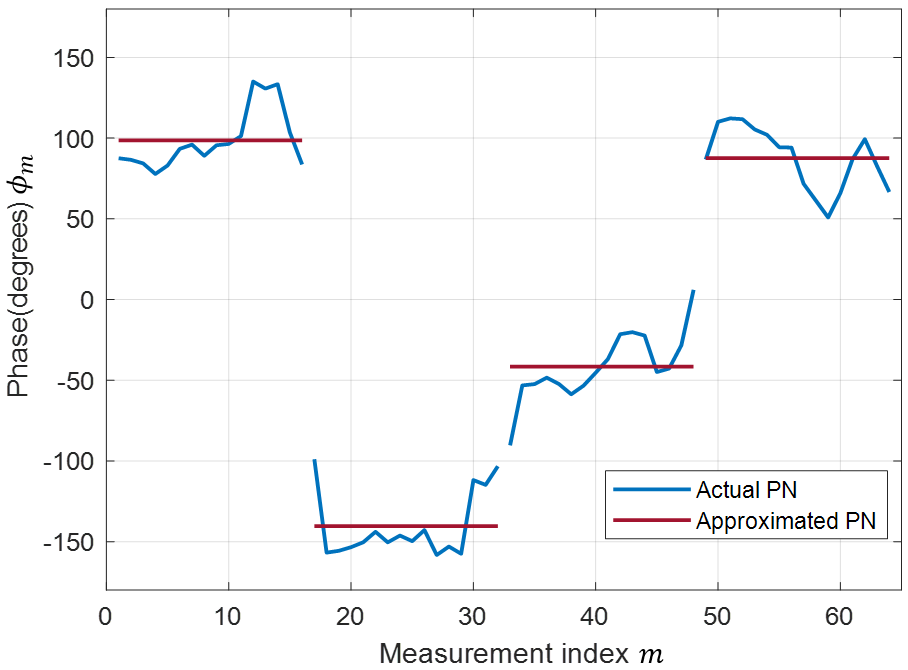}}
\caption{A realization of the phase noise process with the measurement index. Our model assumes that the phase error within each packet is the same.}
\label{fig:phase}
\end{figure}
To develop a tractable algorithm, we ignore the phase variations within each packet and only consider phase offsets across different packets. In Sec.~\ref{sec:simulations}, however, we evaluate our algorithm by also considering phase variations within the packets. 

\par Under the above simplifying assumption, the measurements can be expressed as a partially coherent CS model \cite{9007529}. We define $\phi_p$ as the phase error in the measurements acquired within the $p^{\mathrm{th}}$ packet. The vector of $MP$ channel measurements, acquired over $P$ packets, can be expressed in terms of the CS matrix $\mathbf{A}\in \mathbb{C}^{MP \times N}$ and the phase errors $\{\phi_p\}_{p=1}^{P}$. Based on \eqref{eq:sparse_phase_error}, we define the $m^{\mathrm{th}}$ row of the CS matrix as 
\begin{equation}
\label{eq:CS_matrix}
    \mathbf{A}(m,:)=\mathbf{f}^{\ast}_m \mathbf{U}^{\ast}_{N},
\end{equation}
and a diagonal matrix containing the phase errors as 
$
    \mathbf{\Phi}=\operatorname{diag}\left\{e^{\mathsf{j}\phi_{1}}\mathbf{1}_M,...,e^{\mathsf{j}\phi_{P}}\mathbf{1}_M \right\},
$
where $\mathbf{1}_M$ is a row vector comprising $M$ ones. The vector version of \eqref{eq:sparse_phase_error} is then
\begin{equation}\label{eq:measurement_modelMP}
    \mathbf{y}=\mathbf{\Phi}\mathbf{Ax}+\mathbf{w},
\end{equation}
where $\mathbf{\Phi}$ and $\mathbf{x}$ are unknown. We observe from \eqref{eq:measurement_modelMP} that $\mathbf{\mathbf{x}}$ can only be estimated up to a global phase. This is because $(\Phi_1 e^{-\mathsf{j}\delta}, \mathbf{x}_1 e^{\mathsf{j}\delta})$ is a solution to \eqref{eq:measurement_modelMP} whenever $(\Phi_1, \mathbf{x}_1)$ is a solution. Hence, the goal of partially coherent CS is to estimate $\mathbf{x}$, up to a global phase, from the measurements in \eqref{eq:measurement_modelMP}.

\par We can split the measurement model in \eqref{eq:measurement_modelMP} on a packet-by-packet basis. The measurements from the $p^{\mathrm{th}}$ packet correspond to rows $(p-1)M+1$ to $pM$ of \eqref{eq:measurement_modelMP}. We define the measurements acquired in the $p^{\mathrm{th}}$ packet as $\mathbf{y}_p$, and the CS measurement matrix associated with the $p^{\mathrm{th}}$ packet as $\mathbf{A}_p$. Specifically, $\mathbf{A}_p=\mathbf{A}((p-1)M+1:Mp,:)$ and 
\begin{equation}
\label{eq:measurement_model_pk}
    \mathbf{y}_p=e^{\mathsf{j}\phi_p}\mathbf{A}_p\mathbf{x}+\mathbf{w}_p,
\end{equation}
where $\mathbf{w}_p$ is additive noise. Now, our channel estimation problem is equivalent to estimating $\mathbf{x}$ from the phase perturbed measurements $\{\mathbf{y}_p\}_{p=1}^P$ acquired using CS matrices $\{\mathbf{A}_p\}_{p=1}^P$.

\subsection{Partially coherent matching pursuit (PC-MP)}
\label{sec:PC-MP}
\par Our PC-MP algorithm is a greedy approach that adds one element to the estimated support set in each iteration. Then, the algorithm estimates the phase errors and the sparse signal over the estimated support through alternating minimization. This procedure is carried out until the stopping criterion is met. In this paper, we use the assumption in \cite{9007529} that the number of non-zero coefficients in $\mathbf{x}$ is known.

\par We first discuss our support detection rule in PC-MP. Our algorithm is initialized by setting $\Lambda_0$ to an empty set and $\hat{\mathbf{x}}_{0}$ to a zero vector. Here, we use $\Lambda_t$ to denote the estimated support set of the sparse vector, $\hat{\mathbf{x}}_{t}$ as the estimate of $\mathbf{x}$, and $\hat{\phi}_{p,t}$ as the phase error estimated for packet $p$, after $t$ iterations. The vector of estimated phase errors is denoted by $\hat{\boldsymbol{\phi}}_t$. For the $p^{\mathrm{th}}$ packet, we denote $\mathbf{r}_{p,t}$ as the residual error between the observed measurements and the prediction in the $t^{\mathrm{th}}$ iteration, 
\begin{equation}
    \mathbf{r}_{p,t}=\mathbf{y}_p-e^{\mathsf{j}\hat{\phi}_{p,t}}\mathbf{A}_p\hat{\mathbf{x}}_{t}.
    \label{eq:residual}
\end{equation}
In the $t^{\mathrm{th}}$ iteration, matching pursuit algorithms in standard CS identify the column of $\mathbf{A}_p$ that results in the largest $|\mathbf{A}_{p}(:,k)^{\ast}\mathbf{r}^{(t-1)}_p|$, i.e., the absolute value of the correlation with the residue. In our problem, however, we have $P$ different residues derived from $P$ packets. As the measurements across the $P$ packets are non-coherent, our algorithm sums up the absolute values of the correlations and selects the index that maximizes this summation. The new element added to the support set is
\begin{align}
\label{eq:support_element_added}
    \hat{k}_t &= \underset{k \in [N] \setminus \Lambda_{t-1}}{\operatorname{arg max}}\sum_{p=1}^P|\mathbf{A}_{p}(:,k)^{\ast}\mathbf{r}_{p,t-1}|,
\end{align}
and the augmented support set is $\Lambda_t = \Lambda_{t-1}\cup\hat{k}_t$. For the special case when $P=1$, we observe that the objective in \eqref{eq:support_element_added} becomes identical to that used in matching pursuit algorithms for standard CS. In the $t^{\mathrm{th}}$ iteration, our approach explicitly excludes support elements in $\Lambda_{t-1}$ while the OMP inherently avoids selecting such elements. This is because the residue in our approach may not be orthogonal to the selected columns of the CS matrix, unlike the OMP.

\par After the support estimation step, we use an alternating minimization approach to estimate the non-zero entries of $\mathbf{x}$ over the identified support, and the phase errors $\{{\phi}_p \}_{p=1}^P$. Our method minimizes the sum of the squared errors across all the $P$ packets to estimate these quantities, 
\begin{equation}
    \label{eq:ls2}
    \hat{\mathbf{x}}_{\Lambda_{t}}, \hat{\boldsymbol{\phi}}_t = \underset{\mathbf{z}_{\Lambda_{t}},\boldsymbol{\delta}}{\operatorname{arg min}}
     \sum_{p=1}^P\left\|\mathbf{y}_{p} - e^{\mathsf{j}\delta_p}\mathbf{A}_{p,{\Lambda_{t}}}{\mathbf{z}}_{\Lambda_{t}}\right\|^2,
\end{equation}
where $\mathbf{A}_{p,{\Lambda_{t}}}$ is a submatrix of $\mathbf{A}_p$ obtained by retaining only those columns with indices in $\Lambda_{t}$. We observe that \eqref{eq:ls2} is a non-convex problem. It is, however, a standard least squares problem in $\mathbf{z}_{\Lambda_{t}}$ for a fixed $\boldsymbol{\delta}$. Furthermore, a closed form solution for $\boldsymbol{\delta}$ can be obtained for a fixed $\mathbf{z}_{\Lambda_{t}}$. Our alternating minimization procedure leverages both of these properties. 


\par We now provide closed-form expressions for phase recovery and signal estimation in the $\ell^{\mathrm{th}}$ iteration of alternating minimization in \eqref{eq:ls2}. For a fixed $\mathbf{z}^{(\ell-1)}_{\Lambda_t}$ in \eqref{eq:ls2}, we observe that the minimization problem is separable in $\{\delta_p\}_{p=1}^P$. The phase recovery problem for the $p^{\mathrm{th}}$ packet is then
\begin{align}
\nonumber
    \hat{{\phi}}^{(\ell)}_{p,t} &= \underset{\delta_p}{\operatorname{arg min}}\|\mathbf{y}_{p} - e^{\mathsf{j}\delta_p}\mathbf{A}_{p,\Lambda_{t}}\mathbf{z}^{(\ell-1)}_{\Lambda_t}\|_2\\
    \nonumber
    &=\underset{\delta_p}{\operatorname{arg max}} \text{{ Re}}\{e^{\mathsf{j}\delta_p}\mathbf{y}_p^{\ast}\mathbf{A}_{p,\Lambda_{t}}\mathbf{z}^{(\ell-1)}_{\Lambda_t} \}\\
    \label{eq:optim_phase}
    &=-\mathrm{phase}\left( \mathbf{y}_p^* \mathbf{A}_{p,\Lambda_{t}} \mathbf{z}^{(\ell-1)}_{\Lambda_t}\right).
\end{align}
After estimating $\hat{{\phi}}^{(\ell)}_{p,t}$ for each $p$, the signal in \eqref{eq:ls2} is estimated by solving a least squares problem. The solution is obtained by setting the gradient of the objective
to $0$ and is given by
\begin{equation*}    
\hat{\mathbf{x}}^{(\ell)}_{\Lambda_t}=\left(\sum_{p=1}^P\mathbf{A}^{\ast}_{p,\Lambda_{t}}\mathbf{A}_{p,\Lambda_{t}} \right)^{-1} \sum_{p=1}^Pe^{-\mathsf{j}\hat{{\phi}}^{(\ell)}_{p,t}}\mathbf{A}^{\ast}_{p,\Lambda_{t}}\mathbf{y}_{p}.
\end{equation*}
After the alternating minimization procedure converges in $L$ iterations, our method sets $\hat{\mathbf{x}}_{\Lambda_{t}}=\hat{\mathbf{x}}^{(L)}_{\Lambda_t}$ and $\hat{\boldsymbol{\phi}}_t=\hat{{\phi}}^{(L)}_{p,t}$. The subsequent PC-MP step updates the residue according to \eqref{eq:residual} and then identifies the next element of the support. A summary of our PC-MP technique is provided in Algorithm~\ref{alg:PC-MP}. 
\begin{algorithm}[h]
\caption{Proposed PC-MP algorithm for sparse recovery}
\label{alg:PC-MP}
\begin{algorithmic}[1]
\Statex \textit {\bf Input:} Partially coherent measurements $\{\mathbf{y}_p\}^P_{p=1}$, CS matrices $\{\mathbf{A}_p\}_{p=1}^P$, sparsity level $K$
\Statex \textit{\bf Initilize:} $\mathbf{r}_{p,0}=\mathbf{y}_p, \hat{\mathbf{x}}_0 = \mathbf{0}, \Lambda_0=\emptyset$
\For {$ t= 1,2,\ldots,K$}
\Statex \quad \; \textit{\#Support detection:}
\State $\hat{k}_{t}=\underset{k \in [N] \setminus \Lambda_{t-1}}{\operatorname{argmax}}\sum_{p=1}^P\left|\mathbf{A}_{p}(:,k)^{\ast} \mathbf{r}_{p,t-1}\right|$
\State $\Lambda_{t}=\Lambda_{t-1} \cup \hat{k}_{t}$
\State \textit{\#Signal estimation:} 
\While {$\left|\hat{\mathbf{x}}_{\Lambda_{t}}^{(l)}-\hat{\mathbf{x}}_{\Lambda_{t}}^{(l-1)}\right|>\epsilon$ or $\ell<L$}
\State $\hat{\phi}_{p,t}^{(l)} =-\angle \mathbf{y}_p^{H} \mathbf{A}_{p,\Lambda}\hat{\mathbf{x}}^{(l-1)}_{\Lambda_{t}}\; \forall p$ 
\State $ \hat{\mathbf{x}}^{(l)}_{\Lambda_{t}}\!=\!\left[\sum_{p=1}^P\!\mathbf{A}^*_{p,\Lambda_{t}}\!\mathbf{A}_{p,\Lambda_{t}} \right]^\dagger \!\!\left[\sum_{p=1}^P\!e^{-\mathsf{j}\phi_{p,t}^{(l)}}\!\mathbf{A}^*_{p,\Lambda_{t}}\!\mathbf{y}_{p} \right]$ 
\EndWhile
\State $\mathbf{r}_{p,t} = \mathbf{y}_{p}-e^{\mathsf{j}\hat{\phi}_{p,t}^{(l)}}\mathbf{A}_{p,{\Lambda_{t}}}\hat{\mathbf{x}}_{\Lambda_{t}}\;\forall p$
\EndFor
\State \textit {\bf Output:} $\hat{\mathbf{x}}$
\end{algorithmic}
\end{algorithm}
\par Now, we derive a coherence-based guarantee to successfully identify one element of the true support set of $\mathbf{x}$ using PC-MP, for real-valued CS matrices.   
\begin{prop}\label{prop:DetectionGuarantee}
Consider the channel estimation problem in \eqref{eq:measurement_model_pk} using measurements from $P$ packets. Assume that $\mathbf{A}_p$ is real-valued with normalized columns, i.e., $\|\mathbf{A}_p(:,i)\|_2=1$ for each $p$. For a constant $\beta>0$, the support element identified in the first iteration of PC-MP is correct 
with probability exceeding
\begin{equation}
1-\frac{1}{N^\beta P^\beta \sqrt{\pi(1+\beta) \log (NP)}},
\label{eq:prob}
\end{equation}
under the condition
\begin{equation}
|x_{\mathrm{max}}|\left(1-\frac{(2K-1)}{P}\sum_{p=1}^P\mu_p\right)>2\sigma\sqrt{2(1+\beta)\log (NP)},
\label{eq:condition}
\end{equation}
where $|x_{max}| = \underset{i\in\Lambda}{\operatorname{max}}\left\{|x_i|,  x_i \in \mathbf{x}\right\}$,
  $\mu_p$ is the coherence of $\mathbf{A}_p$~\cite{davenport2012introduction}, $\sigma$ is the standard deviation of Gaussian additive noise $\mathbf{w}$, and $K$ is the sparsity level of $\mathbf{x}$. 
  \end{prop}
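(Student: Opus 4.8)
The plan is to adapt the classical coherence-based analysis of orthogonal matching pursuit to the partially coherent, multi-packet setting. In the first iteration $t=1$ the algorithm is at its initialization $\Lambda_0=\emptyset$, $\hat{\mathbf{x}}_0=\mathbf{0}$, so $\mathbf{r}_{p,0}=\mathbf{y}_p$ and the selection rule \eqref{eq:support_element_added} returns $\hat{k}_1=\arg\max_{k\in[N]}\sum_{p=1}^P|\mathbf{A}_p(:,k)^{\ast}\mathbf{y}_p|$. Substituting \eqref{eq:measurement_model_pk} gives $\mathbf{A}_p(:,k)^{\ast}\mathbf{y}_p=e^{\mathsf{j}\phi_p}\mathbf{A}_p(:,k)^{\ast}\mathbf{A}_p\mathbf{x}+z_{p,k}$ with $z_{p,k}:=\mathbf{A}_p(:,k)^{\ast}\mathbf{w}_p$; since the per-packet factor $e^{\mathsf{j}\phi_p}$ is unit-modulus it drops out of $|\cdot|$, so the phase errors play no role in this first-iteration argument. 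It then suffices to prove that, with the stated probability, $\sum_p|\mathbf{A}_p(:,k^{\star})^{\ast}\mathbf{y}_p|>\sum_p|\mathbf{A}_p(:,k)^{\ast}\mathbf{y}_p|$ for every $k\notin\Lambda$, where $k^{\star}\in\Lambda$ is an index achieving $|x_{k^{\star}}|=|x_{\max}|$; this forces $\hat{k}_1\in\Lambda$.

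Next I would split the ``signal'' and ``noise'' parts with the triangle inequality. Using the normalized columns and the coherence bound $|\mathbf{A}_p(:,i)^{\ast}\mathbf{A}_p(:,j)|\le\mu_p$ for $i\neq j$, expanding $\mathbf{A}_p(:,k^{\star})^{\ast}\mathbf{A}_p\mathbf{x}$ over the $K$-element support $\Lambda$ yields $|\mathbf{A}_p(:,k^{\star})^{\ast}\mathbf{A}_p\mathbf{x}|\ge|x_{\max}|-(K-1)\mu_p|x_{\max}|$, while for $k\notin\Lambda$ we get $|\mathbf{A}_p(:,k)^{\ast}\mathbf{A}_p\mathbf{x}|\le K\mu_p|x_{\max}|$. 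Summing over the $P$ packets, the noiseless gap is at least $P|x_{\max}|-(2K-1)|x_{\max}|\sum_p\mu_p = P|x_{\max}|\bigl(1-\tfrac{2K-1}{P}\sum_p\mu_p\bigr)$, i.e.\ exactly $P$ times the left-hand side of \eqref{eq:condition}. Combining with $\sum_p|\mathbf{A}_p(:,k^{\star})^{\ast}\mathbf{y}_p|\ge P|x_{\max}|\bigl(1-\tfrac{K-1}{P}\sum_p\mu_p\bigr)-\sum_p|z_{p,k^{\star}}|$ and $\sum_p|\mathbf{A}_p(:,k)^{\ast}\mathbf{y}_p|\le K|x_{\max}|\sum_p\mu_p+\sum_p|z_{p,k}|$, correct detection is guaranteed once $P|x_{\max}|\bigl(1-\tfrac{2K-1}{P}\sum_p\mu_p\bigr)>\sum_p|z_{p,k^{\star}}|+\sum_p|z_{p,k}|$ holds for all $k\notin\Lambda$.

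The last step is the probabilistic control of the noise projections. Because each column of the real matrix $\mathbf{A}_p$ has unit norm, $z_{p,k}$ is a zero-mean Gaussian with standard deviation $\sigma$, and the standard Gaussian tail bound gives $\Pr\bigl(|z_{p,k}|>\sigma\sqrt{2(1+\beta)\log(NP)}\bigr)\le \frac{1}{\sqrt{\pi(1+\beta)\log(NP)}}(NP)^{-(1+\beta)}$. A union bound over all $NP$ pairs $(p,k)$ shows that the event $\mathcal{E}=\{\,|z_{p,k}|\le\sigma\sqrt{2(1+\beta)\log(NP)}\ \forall p,k\,\}$ holds with probability at least the quantity in \eqref{eq:prob}. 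On $\mathcal{E}$ we have $\sum_p|z_{p,k^{\star}}|+\sum_p|z_{p,k}|\le 2P\sigma\sqrt{2(1+\beta)\log(NP)}$, so dividing the sufficient condition above by $P$ reproduces exactly \eqref{eq:condition}, and the proof is complete.

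I expect the main obstacle to be careful bookkeeping rather than deep technique: arranging the per-packet coherences $\mu_p$ into the average $\tfrac{1}{P}\sum_p\mu_p$ with the sharp constant $2K-1$, and calibrating the Gaussian threshold so that the $NP$-fold union bound recovers the precise expression \eqref{eq:prob}, including the $\sqrt{\pi(1+\beta)\log(NP)}$ factor. A secondary point worth stating explicitly is that \eqref{eq:condition} is non-vacuous only under the usual coherence condition $\tfrac{1}{P}\sum_p\mu_p<\tfrac{1}{2K-1}$, and that the per-packet phase rotations are harmless precisely because $|\cdot|$ is rotation-invariant.
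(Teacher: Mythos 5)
Your proposal is correct and follows essentially the same route as the paper's proof: control the noise projections $\mathbf{A}_p(:,i)^{\ast}\mathbf{w}_p$ by the threshold $\zeta=\sigma\sqrt{2(1+\beta)\log(NP)}$ on a high-probability event, then use the coherence bounds to show the on-support correlation sum exceeds every off-support one, yielding \eqref{eq:condition} after dividing by $P$. The only (immaterial) difference is that you use a plain union bound over the $NP$ noise variables, whereas the paper applies \v{S}id\'{a}k's inequality followed by Bernoulli's inequality $(1-u)^{NP}\ge 1-NPu$; both give exactly the probability expression in \eqref{eq:prob}.
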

  \begin{proof}
      See the appendix.
  \end{proof}
Intuitively, the result shows that support detection can only be successful under the assumption that the maximum absolute entry of $\mathbf{x}$ is ``larger'' than the additive noise level as described by \eqref{eq:condition}. Further, it also shows that a sufficient condition for successful identification of one element of the support in PC-MP's first iteration is $\sum_{p=1}^P\mu_p / P <1/(2K-1)$. Also, a good choice for the CS matrices for PC-MP is one with the lowest average coherence, i.e., $\sum^P_{p=1} \mu_p / P$. 

\begin{figure*}[htbp]
	\centering
	\subfigure[]{
			\includegraphics[trim=0.8cm 0cm 0.8cm 0.68cm, width=0.315\textwidth]{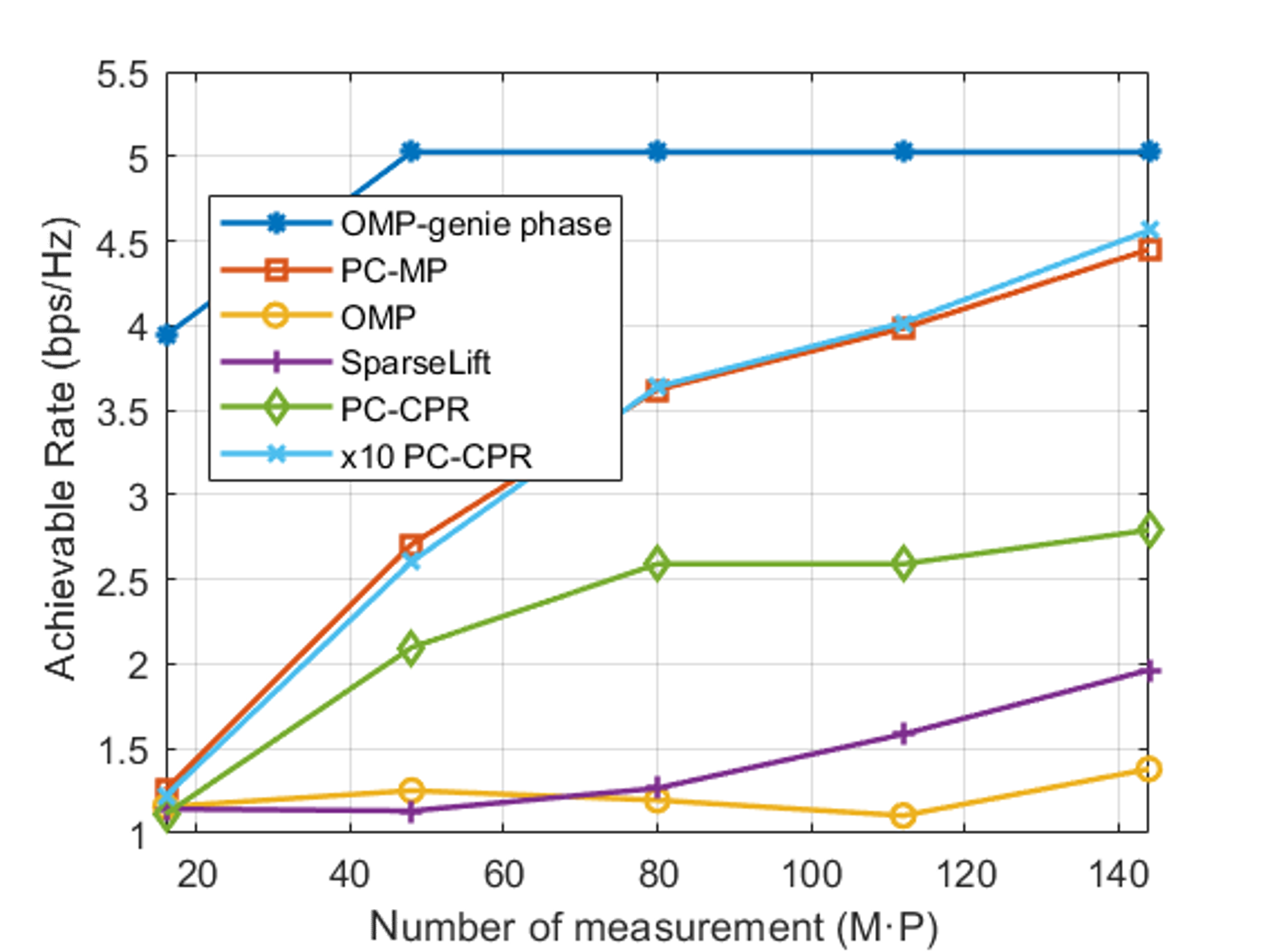}
		\label{fig:ach-mes}
	}
        \subfigure[]{
   		 	\includegraphics[trim=0.8cm 0cm 0.68cm 0.8cm,width=0.315\textwidth]{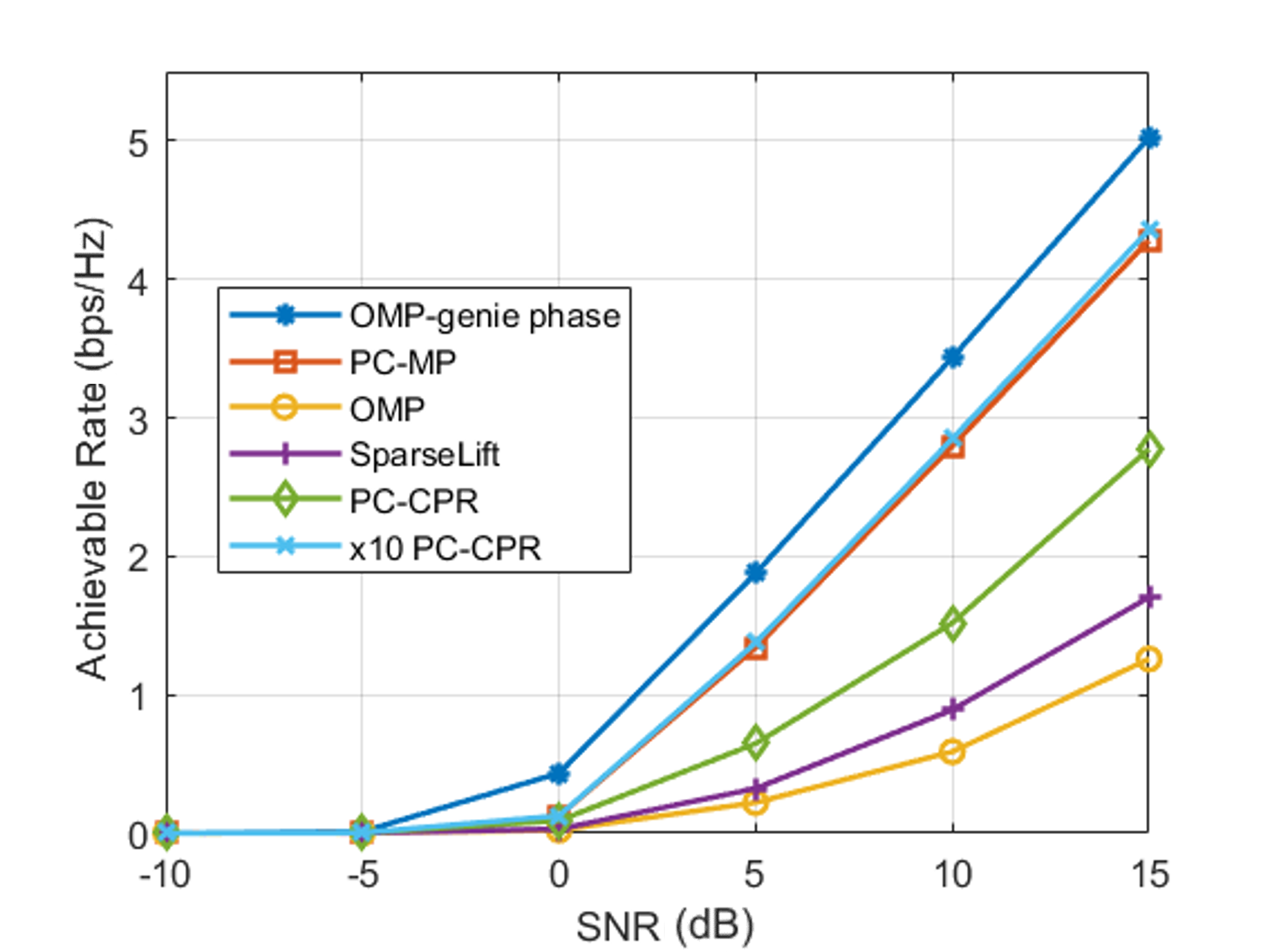}
	\label{fig:ach-snr}
        }
        \subfigure[]{
   		 	\includegraphics[trim=0.8cm 0cm 0.68cm 0.8cm,width=0.315\textwidth]{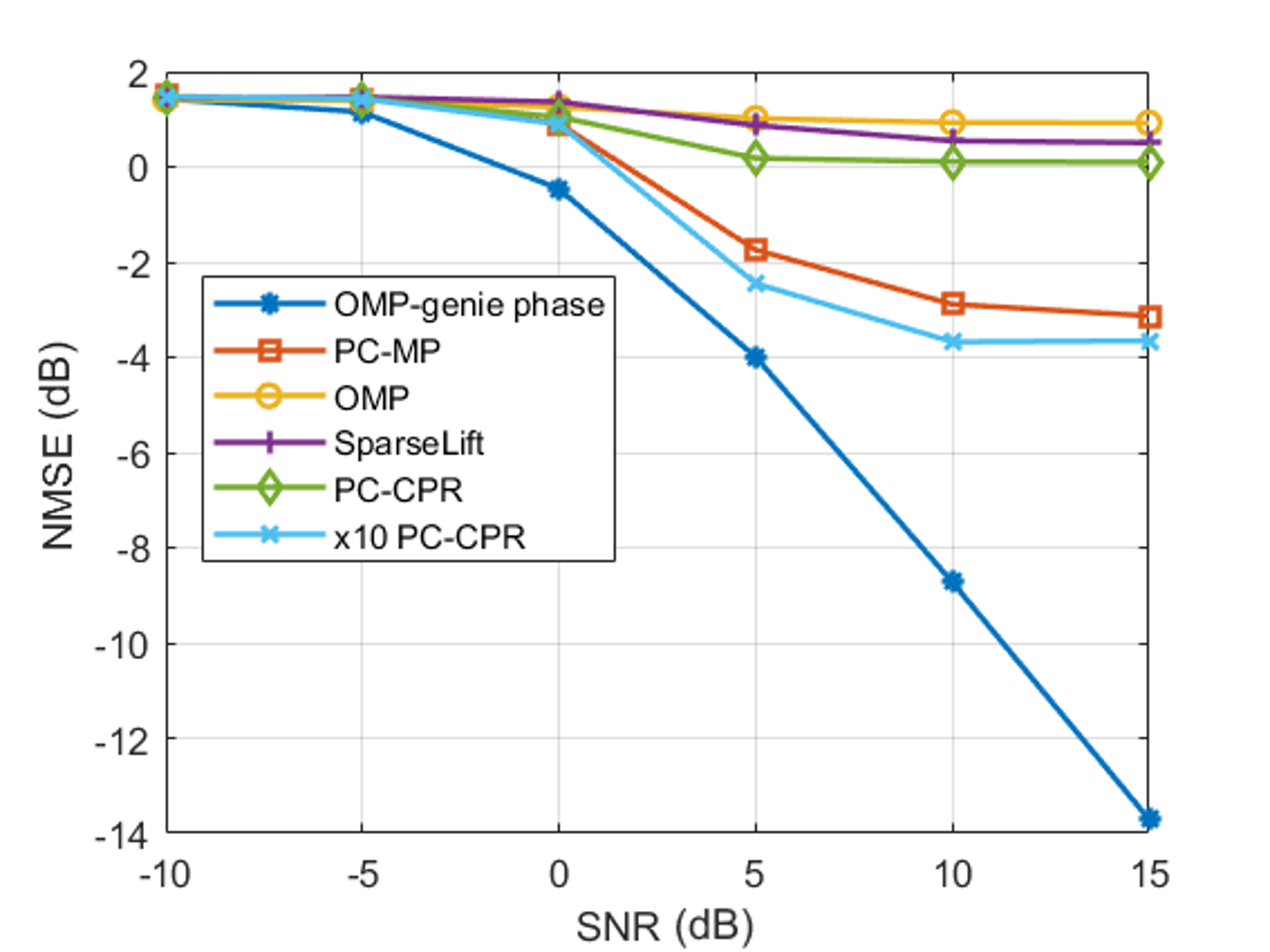}
	\label{fig:nmse-snr}
        }
	\caption{Simulation results that compare PC-MP against the benchmarks for $N=256$ antennas at the TX and $K=4$ sparse channels. (a) Achievable rate with the total number of measurements ($MP$) at SNR = $15$ dB. (b) Achievable rates with SNR for $MP=128$. (c) NMSE in the channel estimate with SNR for $MP=128$. We consider $M=16$ measurements per packet in all the plots. The number of iterations used for PC-CPR is same as that used for PC-MP, while $\times 10$ PC-CPR uses $10 \times$ more iterations than PC-MP. We observe that PC-MP outperforms the benchmarks for a fixed computational complexity.}
	\label{fig: comparison}
\end{figure*}
\section{Simulations}
\label{sec:simulations}
In this section, we compare the performance of our proposed method with OMP, self-calibration-based CS called \emph{Sparse-Lift}~\cite{ling2015self}, and partially coherent CS (PC-CPR)~\cite{9007529}, considering the system model described in section \ref{sec:model}.
\subsection{Benchmarking Algorithms}
\subsubsection{Reconstruction using Sparse-Lift}
The idea in Sparse-Lift is to jointly estimate the calibration errors and the sparse vector by solving for a high-dimensional lifted matrix, which is an outer product of the error vector and the sparse vector. 
To apply Sparse-Lift, we define the calibration error vector as $\mathbf{p}=(e^{\mathsf{j}\phi_1}, e^{\mathsf{j}\phi_2}, \cdots, e^{\mathsf{j}\phi_P})$ and the sparse vector is $\mathbf{x}$. The measurement vector $\mathbf{y}$ in \eqref{eq:measurement_modelMP} is then
\begin{equation}
\label{eq:Sparse-Lift}
    \mathbf{y} = \operatorname{diag}(\mathbf{B}\mathbf{p})\mathbf{Ax}+\mathbf{w}
\end{equation}
where $\mathbf{B}=\mathbf{I}_{P} \otimes \mathbf{1}^{\mathsf{T}}_M$. The $m^{\mathrm{th}}$ entry of $\mathbf{y}$ is given by
\begin{equation}
\label{eq:Sparse-Lift_measm}
    y[m] = \tilde{\mathbf{b}}_m^{\mathsf{T}} \mathbf{p} \mathbf{x}^{\mathsf{T}}\tilde{\mathbf{a}}_m+w[m],
\end{equation}
where $\tilde{\mathbf{b}}_m^{\mathsf{T}}$ is the $m^{\mathrm{th}}$ row of $\mathbf{B}$ and $\tilde{\mathbf{a}}_m$ is the transpose of the $m^{\mathrm{th}}$ row of $\mathbf{A}$. We define $\mathbf{X} = \mathbf{p}\mathbf{x}^{\mathsf{T}}$ as the lifted matrix, which is sparse as $\mathbf{x}$ is sparse. Using the identity $\mathbf{b}^{\mathsf{T}}\mathbf{X}\mathbf{a}=(\mathbf{a}^{\mathsf{T}} \otimes \mathbf{b}) \mathrm{vec}(\mathbf{X})$, we can rewrite \eqref{eq:Sparse-Lift_measm} as
\begin{equation}
y[m] =  (\tilde{\mathbf{a}}^{\mathsf{T}}_m \otimes \tilde{\mathbf{b}}_m^{\mathsf{T}}) \mathrm{vec}(\mathbf{X}) +w[m],
\end{equation}
which is a linear measurement of the lifted vector $\mathrm{vec}(\mathbf{X})$. With Sparse-Lift, $MP$ measurements of this form are used to solve for the sparse matrix $\mathbf{X}=\mathbf{p}\mathbf{x}^{\mathsf{T}}$. Then, the singular value decomposition of the estimate $\hat{\mathbf{X}} = \mathbf{U\Gamma V}^*$ is computed. The estimate of the sparse channel, up to a global scaling, is then $\hat{\mathbf{x}}=\overline{\mathbf{v}}_1$, where $\overline{\mathbf{v}}_1$ is the conjugate of the singular vector corresponding to the largest singular value.

\subsubsection{Partially Coherent Compressive Phase Retrieval (PC-CPR) }
\par The partially coherent CS model in~\cite{9007529} assumes that the coherent measurements are acquired using different radio frequency  chains. Although we consider a single radio frequency chain, our mathematical model is identical to the one in \cite{9007529}. The algorithm in \cite{9007529} is a two-stage method. In the first stage, the support set of the sparse vector is found by taking indices corresponding to the $K$ largest values of $\mathbf{z}$, where $z[k]=\sum_{p=1}^P|{\mathbf{A}_p(:,k)}^{\ast}\mathbf{y}_p|^2/M$. Then, the sparse estimate $\hat{\mathbf{x}}$ is initialized using the eigen-decomposition-based method in \cite{8424911}. In the second stage, the algorithm iteratively estimates the phase errors and the sparse signal. A hard thresholding algorithm is used to estimate the sparse signal. 
\subsection{Results and Discussion}
\par We consider a uniform linear array of size $N=256$ at the TX. The standard deviation of phase noise, conditioned on the previous sample, is given as $\tau = 2\pi f_c \sqrt{cT_s}$, with $f_c = 60$ GHz as the carrier frequency, $c = 4.7\times10^{-18}\left(\text{rad}\cdot \text{Hz}\right)^{-1}$\cite{9103070}, and $T_s=128$ ns as the time duration of a single measurement in a packet. When computing the variance associated with the first sample of each packet, we use $T_s=44 \, \mu$s. In our simulations, $\mathbf{x}$ is exactly sparse with $K=4$ non-zero components drawn from a Gaussian distribution. We use random circularly shifted Zadoff-Chu sequences for the beamformers $\{\mathbf{f}_m\}^M_{m=1}$ \cite{8306126}. With this setup, $M=16$ spatial channel measurements are acquired in each of the $P$ packets.
\par For the performance evaluation, we use the achievable rate as a metric and analyze how it varies with the SNR and the number of measurements. We define the SNR as $10\log_{10}\left(1/\sigma^2\right)$ and the achievable rate as $R = \log_2\left(1+|\mathbf{f}_{\mathrm{est}}^*\mathbf{h}|^2/{\sigma^2}\right)$, where $\mathbf{f}_{\mathrm{est}}$ is set to a unit-norm conjugate beamformer, i.e., $\mathbf{f}_{\mathrm{est}}=\overline{\mathbf{h}}_{\mathrm{est}}/ \Vert \mathbf{h}_{\mathrm{est}} \Vert_2$. As the algorithms can estimate the channel only upto a global phase, we define the normalized mean squared error as $\mathbb{E}[\mathrm{argmin}_{\delta}\Vert e^{\mathsf{j} \delta}\mathbf{h}_{\mathrm{est}}- \mathbf{h}\Vert^2_2]/\mathbb{E}[\Vert \mathbf{h}\Vert^2_2]$. Our results are summarized in Fig.~\ref{fig: comparison}.
\par From the plots in Fig. \ref{fig:ach-mes}, \ref{fig:ach-snr}, and \ref{fig:nmse-snr}, we observe that standard OMP that works well with known phase errors (genie phase) breaks down under unknown practical phase noise. Next, we notice that the proposed PC-MP algorithm outperforms Sparse-Lift in terms of NMSE and the rate. This is because our algorithm exploits the constant magnitude structure in the calibration vector, unlike Sparse-Lift. Furthermore, we also observed that Sparse-Lift required substantially higher computation time than our approach, as it solves for a high dimensional lifted vector with $NP$ variables. Our approach only solves for $N+P$ variables.  Finally, we observe that the proposed PC-MP algorithm outperforms PC-CPR for the same $K$ iterations. This is likely due to the nature of the algorithms, i.e., PC-CPR is based on hard thresholding while our approach is based on matching pursuit. Both these algorithms assume a known sparsity level of $K$. We would like to mention that the performance of PC-CPR for a large number of iterations ($\times 10$ PC-CPR) is close to our PC-MP method for $K$ iterations. For instance, PC-CPR required about $10K$ iterations ($25.4 \,\mathrm{ms})$ to achieve comparable performance as PC-MP for $K$ iterations ($5.4\, \mathrm{ms}$). These timings were observed in Matlab on a desktop computer. In summary, PC-MP can provide good channel estimates at a lower complexity than the benchmarks.

\section{Conclusions and Future Work}
In this paper, we introduced a greedy algorithm called PC-MP for mmWave channel estimation under phase noise. Our approach makes use of the partially coherent structure in the phase perturbed measurements to perform sparse recovery and phase error estimation. PC-MP performs support detection by considering measurements from different packets, and iteratively estimates the sparse vector through alternating minimization. A comprehensive comparison revealed that PC-MP outperforms benchmark algorithms, yielding higher achievable rates and lower NMSE. We also derived guarantees on PC-MP's performance in identifying one element from the true support. In future, we will extend our algorithm to account for off-grid effects and also analyze how well PC-MP can identify the entire support of the sparse vector. 

\section*{Appendix: Proof of Proposition~\ref{prop:DetectionGuarantee}}

We extend the support identification guarantee in~\cite{5483095} to the partially coherent case. Let $\Lambda$ be the true support of the sparse vector $\mathbf{x}$. Our proof verifies that when \eqref{eq:condition} holds, 
\begin{equation}\label{eq:verifiedcon}
    \underset{i\in\Lambda}{\operatorname{max}}\sum_{p=1}^P\left|\mathbf{A}_{p}(:,i)^{\mathsf{T}}\mathbf{y}_p\right|
    > \underset{i\notin\Lambda}{\operatorname{max}}\sum_{p=1}^P\left|\mathbf{A}_{p}(:,i)^{\mathsf{T}}\mathbf{y}_p\right|,
\end{equation}
which matches PC-MP's detection rule to successfully identify one element of the support in its first iteration. To this end, we show that \eqref{eq:verifiedcon} holds under the event $\mathcal{D}$ defined as
\begin{equation}\label{eq:D_defn}
    \mathcal{D}=\left\{\underset{1\leq i \leq N}{\operatorname{max}} \;\underset{1\leq p \leq P}{\operatorname{max}}|\mathbf{A}_{p}(:,i)^{\mathsf{T}}\mathbf{w}_p|<\zeta\right\},
\end{equation}
where $\zeta = \sigma\sqrt{2(1+\beta)\log N}$ controls the success probability of event $\mathcal{D}$. As $\left\{\mathbf{A}_{p}(:,i)^{\mathsf{T}}\mathbf{w} \right\}_{i,p}$ is jointly Gaussian, we get
\begin{align*}
        \operatorname{Pr}\left\{\mathcal{D} \right\}\!&=\operatorname{Pr}\left\{\underset{i,p}{\operatorname{max}}|\mathbf{A}_{p}(:,i)^{\mathsf{T}}\mathbf{w}_p|<\zeta \right\}\\
        &\geq\prod_{i=1}^N \!\prod_{p=1}^P\!\operatorname{Pr}\left\{|\mathbf{A}_{p}(:,i)^{\mathsf{T}}\mathbf{w}_p|\!<\!\zeta \right\}   \!  = \!\left[ 1\!-\!2Q\left(\frac{\zeta}{\sigma} \right)\right]^{NP}\!\!,
    \end{align*}
    where $Q(x)=(1 / \sqrt{2 \pi}) \int_x^{\infty} e^{-z^2 / 2}dz$ is the Gaussian tail probability and the last step uses \cite[theorem 1]{sidak} for jointly Gaussian random variables. Further, since the Gaussian tail probability is bounded by 
$Q(x)\leq\frac{1}{x \sqrt{2 \pi}} e^{-x^2 / 2},$
we obtain
\begin{equation*}
    \operatorname{Pr}\left\{ \mathcal{D} \right\} \geq \left(1-\sqrt{\frac{2}{\pi}} \frac{\sigma}{\zeta}e^{\frac{-\zeta^2}{2  \sigma^2}}\right)^{NP}\geq 1-NP\sqrt{\frac{2}{\pi}} \frac{\sigma}{\zeta}e^{\frac{-\zeta^2}{2  \sigma^2}}.\\
\end{equation*}
Substituting $\zeta = \sigma\sqrt{2(1+\beta)\log (NP)}$ in the above relation shows that
the event $\mathcal{D}$ occurs with probability exceeding \eqref{eq:prob}.

Once the event $\mathcal{D}$ holds for $p=1,2,\ldots,P$, we can simplify the left-hand side of \eqref{eq:verifiedcon} as
\begin{align}
\nonumber
        \underset{i\notin\Lambda}{\operatorname{max}}\sum_{p=1}^P\left|\mathbf{A}_{p}(:,i)^{\mathsf{T}}\mathbf{y}_p\right|\\
        \nonumber
        &\hspace{-3cm}=\underset{i\notin\Lambda}{\operatorname{max}}\sum_{p=1}^P \left|\mathbf{A}_{p}(:,i)^{\mathsf{T}}\mathbf{w}+\sum_{k\in\Lambda} e^{\mathsf{j}\phi_p}\mathbf{A}_{p}(:,i)^{\mathsf{T}}\mathbf{A}_{p}(:,k){x}_k\right|\\
        \nonumber
        &\hspace{-3cm}\leq  \underset{i\notin\Lambda}{\operatorname{max}}\sum_{p=1}^P\left|\mathbf{A}_{p}(:,i)^{\mathsf{T}}\mathbf{w}\right| \\
        \nonumber
        &\hspace{-2.7cm}+\underset{i\notin\Lambda}{\operatorname{max}}\sum_{p=1}^P\sum_{k\in\Lambda}\left|e^{\mathsf{j}\phi_p}\mathbf{A}_{p}(:,i)^{\mathsf{T}}\mathbf{A}_{p}(:,k){x}_k\right|\\
        \label{eq:bound_out_support}
        &\hspace{-3cm}\leq P\zeta+K\sum_{p=1}^P\mu_p\left|x_{\mathrm{max}} \right|,
    \end{align}
    where the last step follows from \eqref{eq:D_defn} and the definition of $\mu_p$ and $x_{\mathrm{max}}$.
For a sparsity level $K\geq1$, using \eqref{eq:measurement_model_pk} and column normalization assumption on $\mathbf{A}_p$, we can bound the right-hand side of \eqref{eq:verifiedcon} as
    \begin{align}
    \nonumber
        \underset{i\in\Lambda}{\operatorname{max}}\sum_{p=1}^P\left|\mathbf{A}_{p}(:,i)^{\mathsf{T}}\mathbf{y}_p\right|\\
        \nonumber
&\hspace{-3cm}=\underset{i\in\Lambda}{\operatorname{max}}\sum_{p=1}^P \left|\mathbf{A}_{p}(:,i)^{\mathsf{T}}\left(\sum_{k\in\Lambda}e^{\mathsf{j}\phi_p}\mathbf{A}_{p}(:,k){x}_k+\mathbf{w}\right)\right|\\
\nonumber
&\hspace{-3cm}=\underset{i\in\Lambda}{\operatorname{max}}\sum_{p=1}^P |{x}_i|\\
\label{eq:bound_in_support_1}
&\hspace{-3cm}\;-\left|\sum_{k\in\Lambda\setminus\{i\}}e^{\mathsf{j}\phi_p}\mathbf{A}_{p}(:,i)^{\mathsf{T}}\mathbf{A}_{p}(:,k){x}_k+\mathbf{A}_{p}(:,i)^{\mathsf{T}}\mathbf{w}\right|
\end{align}
Here, from the definition of $x_{\mathrm{max}}$ and $\mu_p$, we get
\begin{align}
\notag
\underset{i\in\Lambda}{\operatorname{max}}\sum_{p=1}^P\!\left|\mathbf{A}_{p}(:,i)^{\mathsf{T}}\mathbf{y}_p\right|\!&\geq \! P|x_{\mathrm{max}}|-\!\sum_{p=1}^P \!\left[\sum_{k\in\Lambda/i} \mu_p|x_{max}| + \zeta\right]\\
  \label{eq:bound_in_support_2}
        &\geq \! P|x_{\mathrm{max}}|\!-\!P\zeta\!-\!(K\!-\!1)\!\sum_{p=1}^P\mu_p|x_{\mathrm{max}}|.
     \end{align}     
In the first iteration of PC-MP, support identification is successful when \eqref{eq:verifiedcon} holds. We observe that the condition in  \eqref{eq:verifiedcon} holds if the lower bound in \eqref{eq:bound_in_support_2} exceeds the upper bound in \eqref{eq:bound_out_support}, which is equivalent to
\begin{equation}
P|x_{\mathrm{max}}|-2P\zeta-(2K-1)\sum_{p=1}^P\mu_p\left|x_{\mathrm{max}}\right|>0,    
\end{equation}
which is the condition stated in \eqref{eq:condition}. 
\par Our guarantees are to identify only one element of the support set and not the entire support. This limitation is because of the residual errors arising due to joint phase recovery and signal estimation. These errors may not be orthogonal to the columns already selected, which prevents us from applying the induction argument discussed for the coherent case in~\cite{5483095}. 

\bibliographystyle{ieeetr}
\bibliography{ref}

\begin{thebibliography}{10}

\bibitem{9568459}
B.~K.~J. Al-Shammari, I.~Hburi, H.~R. Idan, and H.~F. Khazaal, ``An overview of
  {mmWave} communications for {5G},'' in {\em Proc. Int. Conf. Commun. Inf.
  Technol.}, pp.~133--139, 2021.

\bibitem{9521836}
Z.~Chen, J.~Tang, X.~Y. Zhang, D.~K.~C. So, S.~Jin, and K.-K. Wong, ``Hybrid
  evolutionary-based sparse channel estimation for {IRS}-assisted {mmWave MIMO}
  systems,'' {\em IEEE Trans. Wirel. Commun.}, vol.~21, no.~3, pp.~1586--1601,
  2022.

\bibitem{7835110}
J.~Zhu, D.~W.~K. Ng, N.~Wang, R.~Schober, and V.~K. Bhargava, ``Analysis and
  design of secure massive {MIMO} systems in the presence of hardware
  impairments,'' {\em IEEE Trans. Wirel. Commun.}, vol.~16, no.~3,
  pp.~2001--2016, 2017.

\bibitem{9103070}
R.~Zhang, B.~Shim, and H.~Zhao, ``Downlink compressive channel estimation with
  phase noise in massive {MIMO} systems,'' {\em IEEE Trans. Commun.}, vol.~68,
  no.~9, pp.~5534--5548, 2020.

\bibitem{7913633}
K.~Venugopal, A.~Alkhateeb, N.~G. Prelcic, and R.~W. Heath, ``Channel
  estimation for hybrid architecture-based wideband millimeter wave systems,''
  {\em IEEE J. on Sel. Areas in Commun.}, vol.~35, no.~9, pp.~1996--2009, 2017.

\bibitem{alkhateeb2014channel}
A.~Alkhateeb, O.~El~Ayach, G.~Leus, and R.~W. Heath, ``Channel estimation and
  hybrid precoding for millimeter wave cellular systems,'' {\em IEEE J. of Sel.
  Topics in Signal Process.}, vol.~8, no.~5, pp.~831--846, 2014.

\bibitem{8306126}
N.~J. Myers, A.~Mezghani, and R.~W. Heath, ``Spatial zadoff-chu modulation for
  rapid beam alignment in mmwave phased arrays,'' in {\em Proc. of the IEEE
  Globecom Workshops (GC Wkshps)}, pp.~1--6, 2018.

\bibitem{7905916}
H.~Yan and D.~Cabria, ``Compressive sensing based initial beamforming training
  for massive {MIMO} millimeter-wave systems,'' in {\em Proc. IEEE Glob. Conf.
  Signal Inf. Process.}, pp.~620--624, 2016.

\bibitem{10012988}
K.-H. Liu, X.~Li, H.~Zhao, and G.~Fan, ``Structured phase retrieval-aided
  channel estimation for millimeter-wave/sub-terahertz {MIMO} systems,'' in
  {\em Proc. IEEE Veh. Technol. Conf.}, pp.~1--5, 2022.

\bibitem{ling2015self}
S.~Ling and T.~Strohmer, ``Self-calibration and biconvex compressive sensing,''
  {\em Inverse Problems}, vol.~31, no.~11, p.~115002, 2015.

\bibitem{8770109}
X.~Li, J.~Fang, H.~Duan, Z.~Chen, and H.~Li, ``Fast beam alignment for
  millimeter wave communications: A sparse encoding and phaseless decoding
  approach,'' {\em IEEE Trans. Signal Process.}, vol.~67, no.~17,
  pp.~4402--4417, 2019.

\bibitem{9007529}
C.~Hu, X.~Wang, L.~Dai, and J.~Ma, ``Partially coherent compressive phase
  retrieval for millimeter-wave massive {MIMO} channel estimation,'' {\em IEEE
  Trans. Signal Process.}, vol.~68, pp.~1673--1687, 2020.

\bibitem{9801831}
C.~Wei, H.~Jiang, J.~Dang, L.~Wu, and H.~Zhang, ``Accurate channel estimation
  for {mmWave} massive {MIMO} with partially coherent phase offsets,'' {\em
  IEEE Commun. Lett.}, vol.~26, no.~9, pp.~2170--2174, 2022.

\bibitem{9502043}
``{IEEE} standard for information technology--telecommunications and
  information exchange between systems - local and metropolitan area
  networks--specific requirements - part 11: Wireless {LAN} medium access
  control {(MAC)} and physical layer {(PHY)} specifications - redline,'' {\em
  IEEE Std 802.11-2020 (Revision of IEEE Std 802.11-2016) - Redline},
  pp.~1--7524, 2021.

\bibitem{standard}
``{IEEE} standard for information technology--telecommunications and
  information exchange between systems - local and metropolitan area
  networks--specific requirements - part 11: Wireless {LAN} medium access
  control ({MAC}) and physical layer ({PHY}) specifications,'' {\em IEEE Std},
  pp.~1--4379, 2021.

\bibitem{kome2016further}
O.~Kome, L.~Cen, L.~Hanqing, and Y.~Rui, ``Further details on multi-stage,
  multi-resolution beamforming training in 802.11 ay, doc,'' 2016.

\bibitem{davenport2012introduction}
M.~A. Davenport, M.~F. Duarte, Y.~C. Eldar, and G.~Kutyniok, ``Introduction to
  compressed sensing.,'' 2012.

\bibitem{8424911}
L.~Zhang, G.~Wang, G.~B. Giannakis, and J.~Chen, ``Compressive phase retrieval
  via reweighted amplitude flow,'' {\em IEEE Trans. Signal Process.}, vol.~66,
  no.~19, pp.~5029--5040, 2018.

\bibitem{5483095}
Z.~Ben-Haim, Y.~C. Eldar, and M.~Elad, ``Coherence-based performance guarantees
  for estimating a sparse vector under random noise,'' {\em IEEE Trans. Signal
  Process.}, vol.~58, pp.~5030--5043, Oct 2010.

\bibitem{sidak}
Z.~Sidak, ``Rectangular confidence regions for the means of multivariate normal
  distributions,'' {\em J. Am. Stat. Assoc.}, vol.~62, no.~318, pp.~626--633,
  1967.

\end{thebibliography}

\vspace{12pt}

\end{document}